\def\FN{\textsf{FN}\xspace}
\def\MEB{\textsf{MEB}\xspace}
\def\FP{\textsf{FP}\xspace}
\def\APPROXMEB{\textsf{APPROX-MEB}\xspace}
\def\coreset{\textsf{coreset}\xspace}
\newcounter{thm}
\theoremstyle{plain}
\newtheorem{theorem}[thm]{Theorem}
\newtheorem{corollary}[thm]{Corollary}
\newtheorem{lemma}[thm]{Lemma}
\theoremstyle{remark}
\newcommand{\cB}{\mathcal{B}}
\title{Streaming Diameter of High-Dimensional Points}
\author[1]{ Magn\'us M. Halld\'orsson\thanks{Partially supported by Icelandic Research Fund grant 2511609.}}
\author[2]{Nicolaos Matsakis\thanks{Work done in part while the author was at Charles University. Partially supported by Icelandic Research Fund grant 2511609 and by Czech Science Foundation project 22-22997S.}}
\author[3]{Pavel Vesel\'y\thanks{Partially supported by Czech Science Foundation project 22-22997S, 
by ERC-CZ project LL2406 of the Ministry of Education of Czech Republic,
and
by Center for Foundations of Modern Computer Science (Charles Univ.\ project UNCE 24/SCI/008).}}
\affil[1]{Reykjavik University, \texttt{mmh@ru.is}}
\affil[2]{Reykjavik University, \texttt{nickmatsakis@gmail.com}}
\affil[3]{Charles University, \texttt{vesely@iuuk.mff.cuni.cz}}
\date{}
\begin{document}

\maketitle
\begin{abstract} 
We improve the space bound for streaming approximation of Diameter but also of Farthest Neighbor queries, Minimum Enclosing Ball and its Coreset, in high-dimensional Euclidean spaces. In particular, our deterministic streaming algorithms store $\mathcal{O}(\varepsilon^{-2}\log(\frac{1}{\varepsilon}))$ points. This improves by a factor of $\varepsilon^{-1}$ the previous space bound of Agarwal and Sharathkumar (SODA 2010), while offering a simpler and more complete argument. We also show that storing $\Omega(\varepsilon^{-1})$ points is necessary for a $(\sqrt{2}+\varepsilon)$-approximation of Farthest Pair or Farthest Neighbor queries.%
\end{abstract}

\section{Introduction}
In the streaming model the input data is assumed to be vast making it impossible to entirely store it. Therefore, we may store only a sketch of the input while making a few passes over it. The research on streaming algorithms has been remarkably fruitful and we now have optimal or near-optimal algorithms for counting distinct elements, frequency moments, quantiles and a plethora of other problems~(see~\cite{cormodebook} for a comprehensive exposition).

We focus on high-dimensional geometric streams, where the input $S$ consists of points in $\mathbb{R}^{d}$ for $d$ large, a topic of recent interest, e.g.,~\cite{Woodruff022, EsfandiariKMWZ24, makarychev, CzumajJK0Y22,  MahabadiRWZ20, ChenJLW22, DBLP:conf/stoc/ChenCJLW23, ChenJK23,JiangKS24}. Extent measures, such as the Diameter or the Minimum Enclosing Ball, are fundamental statistics of a set of points, having a body of work in both streaming and non-streaming settings~\cite{AgarwalHV04, Indyk, GoelIV01, AgarwalMS91, BadoiuHI02, BadoiuClarkson}.

In an influential work, Agarwal and  Sharathkumar~\cite{AS15j} gave a streaming algorithmic framework for several high-dimensional extent problems.
Their Blurred Ball Cover data structure maintains a collection of $\mathcal{O}(\frac{1}{\varepsilon^{2}}\log(\frac{1}{\varepsilon}))$ balls, whose union approximately covers the input $S$.
It was then used to approximate a number of high-dimensional extent problems.
The claim is that each ball is represented by a coreset of 
$\mathcal{O}(\frac{1}{\varepsilon})$ points of $S$, for a total space of $\mathcal{O}(\frac{1}{\varepsilon^{3}}\log(\frac{1}{\varepsilon}))$ points.
It appears though that a somewhat higher space bound is needed for the claimed approximations; see Appendix~\ref{app:AS15issue}.

We give a modified data structure, Guarded Ball Cover, building extensively on \cite{AS15j}.
It allows for both a simpler and more complete treatment, and also results in the smaller space bound of $\mathcal{O}(\frac{1}{\varepsilon^{2}}\log(\frac{1}{\varepsilon}))$ points. The improved space bound extends to all four applications: approximate Farthest Neighbor queries and maintaining approximate Diameter, Minimum Enclosing Ball and Coreset for Minimum Enclosing Ball.
This is feasible by storing only a single point per ball, along with its center and radius. Correctness arguments are simplified by also storing the first point of $S$ as a proxy for all points deleted later from memory.

We also show that $\Omega(\varepsilon^{-1})$ points need to be stored for a comparable $(\sqrt{2}+\varepsilon)$-approximation of Farthest Pair or Farthest Neighbor queries. This applies to a computational model where the algorithm must return an input point upon a query and the space is determined by the number of points stored; crucially, once a point is deleted, it cannot be retrieved.

\section{Preliminaries}
Let $S$ be a multiset of points in $\mathbb{R}^{d}$ that arrive sequentially in a stream. Upon arrival, each point $p\in S$ is either stored in memory (and, possibly, deleted later) or irrevocably discarded. We assume one-pass streaming algorithms in the insertion-only setting. By $\varepsilon\in(0,1]$ we denote an error parameter and by $\alpha>1$ an approximation guarantee.

An extent measure of a set of points computes certain statistics of either this set or a geometric shape enclosing it~\cite{AgarwalHV04}. Let $\Vert pq\rVert$ denote the Euclidean distance between points $p\in\mathbb{R}^{d}$ and $q\in\mathbb{R}^{d}$. The \emph{Diameter} is the maximum Euclidean distance between any pair of points in $S$ and the \emph{Farthest Pair} $\FP(S)$ is a pair of points of $S$ having Euclidean distance equal to the Diameter. The \emph{Farthest Neighbor} of a point $q$ is the point $p$ of greatest Euclidean distance from $q$. An $\alpha$-farthest-neighbor $\alpha$-$\FN(q)$ of a query point $q\in\mathbb{R}^{d}$ is a point $x\in S$ such that for every $p\in S$ it is $\Vert qp\rVert\leq \alpha \cdot \Vert xq\rVert$. 

By $B(c(B),r(B))$ we denote a ball centered at point $c(B)$ with radius $r(B)$. The $(1+\varepsilon)$-expansion of $B(c,r)$ is defined as $B(c, (1+\varepsilon)r)$. The \emph{Minimum Enclosing Ball} $\MEB(S)$ is the ball of minimum radius containing all points of $S$. A Ball $B$ is $\alpha$-$\MEB(S)$ if $r(B)\leq \alpha r(\MEB(S))$ and each point of $S$ is within Euclidean distance $r(B)$ from $c(B)$.

For a set of points $S$, a \emph{coreset} is a set $S'\subseteq S$ that preserves a geometric property of $S$~\cite{peledbook}. A set $S'\subseteq S$ is $\alpha$-$\coreset(S)$ for the \MEB, if each point of $S$ is contained in the $\alpha$-expansion of $\MEB(S')$. 

We compute $\alpha$-$\FN(q)$ for any query $q\in\mathbb{R}^{d}$ and maintain $\alpha$-$\FP(S)$,  $\alpha$-$\MEB(S)$ and $\alpha$-$\coreset(S)$. Agarwal and Sharathkumar~\cite{AS15j} gave streaming algorithms computing $(\sqrt{2}+\varepsilon)$-$\FN(q)$ and maintaining $(\frac{1+\sqrt{3}}{2}+\varepsilon)$-$\MEB(S)$, $(\sqrt{2}+\varepsilon)$-$\coreset(S)$ and $(\sqrt{2}+\varepsilon)$-$\FP(S)$. For these approximations, they designed the aforementioned Blurred Ball Cover data structure.

\subsection{Related Work}

The algorithm of Gonzalez~\cite{Gonzalez85} computes a 2-\MEB by storing the first point of $S$ and its farthest neighbor, located on the boundary of a ball defined by the first point as center. Zarrabi-Zadeh and Chan~\cite{ZZ} improved the guarantee of $\alpha$-\MEB to $\alpha=1.5$ by giving an one-pass algorithm that stores one ball. They also gave a lower bound of $\frac{\sqrt{2}+1}{2}\approx 1.207$ for the guarantee of any deterministic algorithm for the $\alpha$-\MEB that stores only one ball.

Bad\u{o}iu et al.~\cite{BadoiuHI02} showed that the number of coreset points approximating $(1+\varepsilon)$-$\MEB(S')$ for a set $S'$ in $\mathbb{R}^{d}$ does not depend on $d$. 
Improved algorithms were given in \cite{KumarMY03},~\cite{DBLP:conf/soda/BadoiuC03} and~\cite{BadoiuClarkson}.

Following the conference result of~\cite{AS15} that maintains a $(\frac{1+\sqrt{3}}{2}+\varepsilon)$-$\MEB(S)$, Chan and Pathak~\cite{ChanP11} improved this guarantee to $\alpha=1.22+\varepsilon$ by employing a detailed analysis to this algorithm. Subsequently, Agarwal and  Sharathkumar in their journal paper~\cite{AS15j} observed that this guarantee is slightly greater than $\frac{\sqrt{2}+1}{2}\approx 1.207$ by presenting an input for $d=3$~\cite{AS15j}; this algorithm outputs the \MEB containing all balls stored.

Moreover, a simple $(1/\sqrt{3})$-approximate two-pass algorithm for the Diameter was given by Egecioglu and Kalantari~\cite{EgeciogluK89}, working in space $\mathcal{O}(d)$.

On the negative side, any randomized algorithm that maintains $\alpha$-$\FP(S)$, $\alpha$-$\MEB(S)$ or $\alpha$-$\coreset(S)$ with probability at least $2/3$ requires $\Omega(\min\{n,\exp(d^{1/3})\})$ space for certain values of $\alpha$. These values are $\alpha<\sqrt{2}(1-2/d^{1/3})$ for $\alpha$-$\FP(S)$ and $\alpha$-$\coreset(S)$ and $a<(1+\sqrt{2})(1/2-1/d^{1/3})$ for $\alpha$-$\MEB(S)$, as shown by Agarwal and Sharathkumar~\cite{AS15j}.

For low $d$, such as $d = O(1)$ or $d = O(\log \log n)$,
the lower bound for $\FP(S)$ does not apply as it is possible to maintain $(1+\varepsilon)$-$\FP(S)$ (or answer $(1+\varepsilon)$-$\FN(x)$ queries) in a poly-logarithmic space, using an optimized version of the sampling approach of~\cite{FrahlingIS08}; this applies even to dynamic streams where points may be deleted.

\section{The Guarded Ball Cover}

The Guarded Ball Cover is a collection $\cB$ of balls that approximately cover all points of $S$. We represent each ball of it by a triplet $B = (c,r,q)$, where $c$ is the center of $B$ (possibly $c\notin S$), $r$ is its radius, and $q$ is a point of $S$. The point $q$ is referred to as the \emph{guard} of $B$.
Our algorithm maintains a coreset $Q$ that consists of the guard points.
We treat the first point $p_1\in S$ specially by always having $p_1\in Q$.

Let $(1+\varepsilon)\cB = \{ (c,(1+\varepsilon)r,q) : (c,r,q) \in \cB\}$ be the collection of the $(1+\varepsilon)$-expansions of the balls in $\cB$. If the arriving point $p\in S$ belongs to $(1+\varepsilon)\cB$, then it is discarded.
Otherwise, a new ball is 
added to $\cB$. 
Finally, all balls of too small radius are removed from $\cB$.

As the space bound is our primary measure, we assume an exact \MEB algorithm, but a good approximation (within, say, a factor of $1+\varepsilon^2/16$) also suffices.

\begin{algorithm}
\caption{Algorithm for processing a new point $p\in S$ (excluding the first point $p_1$)}\label{alg:2prime}\label{alg:2}
\begin{algorithmic}[1]
\If{$p \not\in (1+\varepsilon)\cB$} \Comment{If $p$ is outside of the expansions of all guarded balls}
  \State $(c,r) \leftarrow \MEB(Q\cup\{p\})$ \Comment{Compute new \MEB}
  \State $\cB \leftarrow \{ (c,r,p)\} \cup \{ (c',r',p') \in \cB : r' \ge \varepsilon^{2}r/80\}$
  \Comment{Add new ball, delete small balls}
  \State $Q \leftarrow \{p_1\} \cup \bigcup_{(c'',r'',q)\in \cB} \{q\}$   \Comment{Update coreset}
  
\EndIf
\end{algorithmic}
\label{alg:ours}
\end{algorithm}

The following lemma holds for every \MEB computed in line 2:

\begin{lemma}[Lemma 2.2 in~\cite{BadoiuHI02}]\label{lm:halfspace}
If $B=B(c,r)$ is the \MEB of a set $X$ of points, then any closed half-space containing $c$ also contains a point of $X$ on the boundary of $B$.
\end{lemma}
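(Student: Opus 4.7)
The plan is to argue by contradiction. Suppose there is a closed half-space $H$ with $c \in H$ such that no point of $X$ on the boundary of $B$ lies in $H$. I will construct a strictly smaller ball that still encloses $X$, contradicting the minimality of $B$.

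First I will write $H = \{x : \langle x, n\rangle \ge \beta\}$ for a unit vector $n$ and scalar $\beta$, with $\langle c, n\rangle \ge \beta$. The assumption says every $x \in X$ with $\|x - c\| = r$ satisfies $\langle x, n\rangle < \beta$, equivalently $\langle x - c, n\rangle < 0$. Since $X$ is finite (the standard \MEB setting), only finitely many such boundary points exist, and thus there is $\delta > 0$ with $\langle x - c, n\rangle \le -\delta$ for every $x \in X$ on the boundary of $B$.

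Next I will shift the center slightly away from $H$: define $c_t := c - t n$ for small $t > 0$. For every $x \in X$ on the boundary of $B$,
\[
\|x - c_t\|^2 = \|x - c\|^2 + 2t\,\langle x - c, n\rangle + t^2 \;\le\; r^2 - 2t\delta + t^2,
\]
which is strictly less than $r^2$ whenever $0 < t < 2\delta$. For each $x \in X$ lying strictly inside $B$, continuity combined with finiteness of $X$ gives a uniform $t_0 > 0$ with $\|x - c_t\| < r$ for $0 < t < t_0$. Picking $t$ smaller than both thresholds places every point of $X$ strictly inside $B(c_t, r)$, so $r$ can be decreased slightly while still enclosing $X$, contradicting that $B = \MEB(X)$.

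The only subtle step will be justifying the uniform gap $\delta$: strict negativity of each individual inner product $\langle x - c, n\rangle$ does not automatically give a common positive lower bound, so finiteness (or at least compactness) of the set of boundary points of $X$ must be invoked. All remaining manipulations are elementary.
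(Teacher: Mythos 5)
Your proof is correct. Note that the paper does not prove this statement at all: it is imported verbatim as Lemma~2.2 of B\u{a}doiu, Har-Peled and Indyk, so there is no in-paper argument to compare against. What you give is the standard self-contained perturbation proof of that cited fact: assuming all points of $X$ on the boundary of $B$ avoid the half-space, you shift the center by $c_t = c - tn$, check via the expansion $\lVert x - c_t\rVert^2 = \lVert x-c\rVert^2 + 2t\langle x-c,n\rangle + t^2$ that boundary points move strictly inside for $0 < t < 2\delta$, handle interior points by a uniform margin, and contradict minimality of the radius. The one subtlety you flag — that the pointwise strict inequalities $\langle x-c,n\rangle < 0$ must be upgraded to a uniform gap $\delta$ — is handled correctly by finiteness, and this is harmless here since in the paper's application $X = Q \cup \{p\}$ is always finite (for infinite $X$ one would invoke compactness of $X \cap \partial B$, or note that the degenerate case $X \cap \partial B = \emptyset$ already contradicts minimality directly, which your argument also covers implicitly since the $\delta$-condition is then vacuous). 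So the proposal is a valid and essentially the textbook proof of the cited lemma; the only difference from the paper is that the paper simply defers to the citation.
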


To analyze the algorithm, we first observe that deleted balls are ``guarded'' by $p_1$.

\begin{lemma}\label{lm:close}
Suppose ball $B$ is deleted when a ball of radius $r$ is added to $\cB$. 
Then, $B$ is contained in a ball of radius $\varepsilon^2 r/40$ centered at $p_1$.
\end{lemma}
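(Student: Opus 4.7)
The plan is to leverage the invariant that $p_1 \in Q$ at all times, which forces every ball in $\cB$ to enclose $p_1$; once such a ball has small radius, this constrains it to lie close to $p_1$. I will trace back to the moment when ball $B = (c',r',q')$ was inserted into $\cB$. At that step, line~2 set $(c',r') \leftarrow \MEB(Q \cup \{p\})$ for some arriving point $p$ and the then-current coreset $Q$. Since $Q$ always contains $p_1$ (by the explicit $\{p_1\} \cup \dots$ in line~4 together with the initialization), $p_1$ lies in $Q \cup \{p\}$ and therefore in the computed \MEB. In particular, $\|p_1 - c'\| \le r'$.

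Next, I will observe that the parameters $(c',r')$ of a ball in $\cB$ are never altered after it is inserted, since the algorithm only inserts a fresh ball or discards existing ones. Hence the inequality $\|p_1 - c'\| \le r'$ persists unchanged until the moment $B$ is deleted.

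Finally, at the deletion step a new ball of radius $r$ is being added, and $B$ is removed precisely because $r' < \varepsilon^2 r/80$. For any $x \in B$, the triangle inequality gives $\|x - p_1\| \le \|x - c'\| + \|c' - p_1\| \le 2r' < \varepsilon^2 r/40$, so $B$ is contained in the ball of radius $\varepsilon^2 r/40$ centered at $p_1$.

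I do not anticipate a genuine obstacle here; the statement is essentially an unwinding of why retaining $p_1$ in $Q$ acts as a ``guard'' for everything that is eventually evicted. The only mild care needed is to confirm that every ball ever placed in $\cB$ was indeed produced by an \MEB call whose input included $p_1$, which is immediate from inspecting lines~2 and~4.
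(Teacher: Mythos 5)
Your proof is correct and follows exactly the paper's (very terse) argument: every ball in $\cB$ contains $p_1$ because $p_1 \in Q$ at every \MEB computation, and a deleted ball has radius below $\varepsilon^2 r/80$, so the triangle inequality places it inside $B(p_1, \varepsilon^2 r/40)$. No substantive difference from the paper's proof; yours just spells out the details.
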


\begin{proof}
$B$ contains $p_1$ and when deleted in line 3, it has radius at most $\varepsilon^2 r/80$.
\end{proof}

\noindent
The main technical part is to show that the radii of new balls increase exponentially:

\begin{lemma}\label{lm:expon} 
If $B_{i+1} = (c_{i+1},r_{i+1},p)$ is added to $\cB$ following $B_i = (c_i, r_i, p_i)$, then it holds that $r_{i+1} \ge (1+\varepsilon^{2}/8)\cdot r_i$.
\end{lemma}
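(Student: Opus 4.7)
The plan is to relate $r_{i+1}$ to the radius $R$ of a slightly larger MEB that we can analyze via Lemma~\ref{lm:halfspace}. Set $X := Q_{i-1} \cup \{p_i\}$; by construction of $B_i$ (line~2) we have $\MEB(X) = B(c_i, r_i)$. Writing $X = Q_i \cup D$ where $D := Q_{i-1} \setminus Q_i$ is the set of guards of balls that were deleted at the moment $B_i$ was added, and observing that no deletions occur between two insertions (so $B_i$ is still in $\cB$ when $B_{i+1}$ is inserted, giving $p_i \in Q_i$), Lemma~\ref{lm:close} applied with parameter $r = r_i$ yields $\|z - p_1\| \le \varepsilon^2 r_i/40$ for every $z \in D$.

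The key step is to show that $R$, the radius of $\MEB(X \cup \{p\})$, satisfies $R \ge (1 + \varepsilon^2/4)\, r_i$. Because $\MEB(X) = B_i$, the standard separating-hyperplane consequence of Lemma~\ref{lm:halfspace} expresses $c_i$ as a convex combination $c_i = \sum_j \alpha_j x_j$ with $x_j \in X$ and $\|x_j - c_i\| = r_i$. Letting $c^*$ denote the center of $\MEB(X \cup \{p\})$, I expand
\[
\|x_j - c^*\|^2 = r_i^2 + 2 \langle x_j - c_i,\, c_i - c^*\rangle + \|c_i - c^*\|^2
\]
and take the $\alpha$-weighted sum; the cross term vanishes because $\sum_j \alpha_j (x_j - c_i) = 0$, yielding $r_i^2 + \|c_i - c^*\|^2 = \sum_j \alpha_j \|x_j - c^*\|^2 \le R^2$. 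Combining with $\|p - c^*\| \le R$ and the triangle inequality gives $(1+\varepsilon)\, r_i < \|p - c_i\| \le R + \sqrt{R^2 - r_i^2}$, from which elementary algebra (setting $R = \rho\, r_i$ and, when $\rho < 1+\varepsilon$, squaring the rearranged inequality) produces $\rho \ge 1 + \varepsilon^2/(2(1+\varepsilon)) \ge 1 + \varepsilon^2/4$ for $\varepsilon \in (0,1]$.

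To conclude, I use $p_1$ as a proxy for $D$: since $p_1 \in Q_i \subseteq B(c_{i+1}, r_{i+1})$ and every $z \in D$ lies within $\varepsilon^2 r_i/40$ of $p_1$, the ball $B(c_{i+1},\, r_{i+1} + \varepsilon^2 r_i/40)$ contains $Q_i \cup D \cup \{p\} = X \cup \{p\}$, forcing $R \le r_{i+1} + \varepsilon^2 r_i/40$. Chaining the two bounds,
\[
r_{i+1} \ge R - \varepsilon^2 r_i/40 \ge \left(1 + \tfrac{\varepsilon^2}{4} - \tfrac{\varepsilon^2}{40}\right) r_i = \left(1 + \tfrac{9\varepsilon^2}{40}\right) r_i \ge \left(1 + \tfrac{\varepsilon^2}{8}\right) r_i.
\]

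The main obstacle is choosing the right auxiliary set on which to apply Lemma~\ref{lm:halfspace}: working directly with $\MEB(Q_i)$ does not give a useable lower bound, because deletions may shrink its MEB strictly below $r_i$, so the convex-combination identity has no grip. Passing through the "undeleted" set $X$, whose MEB is tight at $B_i$, and then paying the small $\varepsilon^2 r_i/40$ slack to re-absorb $D$ via $p_1$, is precisely what bridges the raw MEB-growth bound of $1 + \varepsilon^2/4$ down to the claimed $1 + \varepsilon^2/8$.
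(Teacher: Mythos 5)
Your proof is correct, and in its key technical step it takes a genuinely different route from the paper's. The paper argues by a two-case analysis on $\lVert c_i c_{i+1}\rVert$: when the centers are close it uses $p\notin(1+\varepsilon)B_i$ and the triangle inequality directly, and when they are far it applies Lemma~\ref{lm:halfspace} to $B_i$ to obtain a boundary witness $q'$ in the halfspace away from $c_{i+1}$, bounds $\lVert q'c_{i+1}\rVert$ by the cosine law, and then uses Lemma~\ref{lm:close} to replace the possibly deleted $q'$ by a stored point near $p_1$. You instead interpose the exact \MEB of the full witness set $X\cup\{p\}$ (deleted guards included): the convex-combination characterization of the \MEB center (a valid consequence of Lemma~\ref{lm:halfspace}), the identity $r_i^2+\lVert c_i-c^*\rVert^2\le R^2$, and the triangle inequality against $\lVert p-c_i\rVert>(1+\varepsilon)r_i$ give $R\ge\bigl(1+\tfrac{\varepsilon^2}{2(1+\varepsilon)}\bigr)r_i\ge(1+\varepsilon^2/4)r_i$ with no case distinction (your algebra checks out), and only afterwards do you transfer from $R$ to $r_{i+1}$ via Lemma~\ref{lm:close}, paying $\varepsilon^2 r_i/40$ by using $p_1$ as a proxy for the deleted guards. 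Your bookkeeping is also sound: $X=Q_{i-1}\cup\{p_i\}=Q_i\cup D$, $p_i\in Q_i$ since no deletions occur between insertions, each guard in $D$ lies in its own deleted ball, and minimality of $\MEB(X\cup\{p\})$ gives $R\le r_{i+1}+\varepsilon^2 r_i/40$. What each approach buys: yours eliminates the geometric case analysis and yields a slightly better constant ($1+9\varepsilon^2/40$ versus the paper's $1+\varepsilon^2/6$ from its first case), at the price of invoking the convex-hull property of \MEB centers rather than only the stated halfspace lemma; the paper's argument stays entirely with stored points and the new center $c_{i+1}$, and it additionally disposes, in one line, of the variant in which line 2 computes only a $(1+\varepsilon^2/16)$-approximate \MEB --- a case your write-up does not address (your slack of $9\varepsilon^2/40$ would absorb the radius loss, but your two uses of exactness, $\MEB(X)=B_i$ and minimality of $\MEB(X\cup\{p\})$, would then need comment).
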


\begin{proof}
The proof is similar to that of Lemma 2 in~\cite{AS15j} and Claim 2.4 in~\cite{BadoiuHI02}.
Let $Q$, $\hat{Q}_D$ be the point sets such that
$B_{i+1} = \MEB(Q \cup \{p\})$ and $B_i = \MEB(Q \cup \hat{Q}_D)$. Consider two cases:

%
If $\lVert c_i c_{i+1}\rVert\le 5\varepsilon r_{i}/6$ then $r_{i+1}\ge \lVert c_{i+1}p \rVert\ge \lVert c_ip \rVert -\lVert c_ic_{i+1} \rVert\ge (1+\varepsilon)r_{i}-5\varepsilon r_{i}/6\ge (1+\varepsilon^{2}/6)r_{i}$ (Figure~\ref{fig:1}, left), using the triangle inequality and that $p \not\in (1+\varepsilon)B_i$ (by line 1).

If $\lVert c_i c_{i+1}\rVert > 5\varepsilon r_{i}/6$ then let $h$ be the hyperplane passing through $c_i $ with direction $c_i c_{i+1}$ as its normal and let $h^{+}$ be the halfspace bounded by $h$ that does not contain $c_{i+1}$.  
There is a point $q'\in (Q \cup \hat{Q}_D) \bigcap h^{+}$ at Euclidean distance $r_{i}$ from $c_i$, by Lemma~\ref{lm:halfspace}.
Then,  
$\Vert q'c_{i+1}\rVert\ge (\Vert c_i c_{i+1}\rVert^{2}+\Vert q'c_i \rVert^{2})^{1/2}\ge  ((5\varepsilon r_{i}/6)^{2}+r_{i}^{2})^{1/2} \ge (1+\varepsilon^{2}/4)r_{i}$ (Figure~\ref{fig:1}, right), where the first inequality follows from the cosine law.
By Lemma~\ref{lm:close}, there is a point $q \in Q$ such that $\lVert q q'\rVert \le (\varepsilon^2/40) r_i$.
Hence, $r_{i+1} \ge \lVert q c_{i+1} \rVert \ge \lVert q' c_{i+1} \rVert - \lVert q q' \rVert \ge (1+\varepsilon^2/5) r_i$.

If we use a $(1+\varepsilon^2/16)$-approximate \MEB  (utilizing the algorithm of~\cite{BadoiuClarkson} as a subroutine), then we still have that $r_{i+1} \ge (1+\varepsilon^2/5) r_i/(1+\varepsilon^2/16) \ge (1+\varepsilon^2/8)r_i$.
\end{proof}

\begin{figure}
\begin{center}
\includegraphics[scale=0.8]{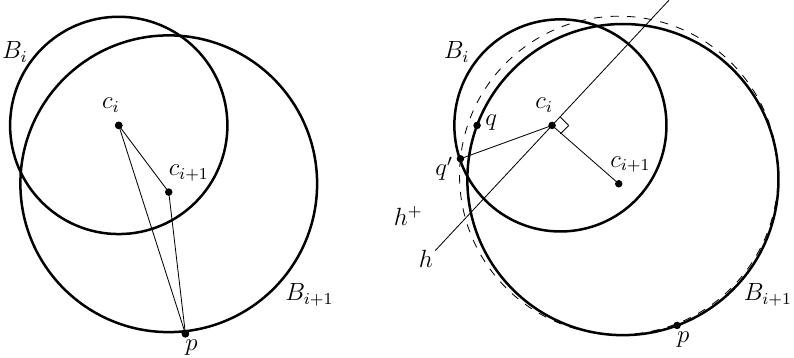}
\caption{Left: Case $\lVert c_i c_{i+1}\rVert \le 5\varepsilon r_{i}/6$, Right: Case $\lVert c_i c_{i+1}\rVert > 5\varepsilon r_{i}/6$ of Lemma~\ref{lm:expon}. The dashed ball is the \MEB of all points, assuming that $q'$ is deleted before $B_{i+1}$ is created.}\label{fig:1}
\end{center}
\end{figure}

Finally, we show that at any time, the $(1+\varepsilon)$-expansion of any deleted ball is contained in the $(1+\varepsilon)$-expansion of each ball created after the deletion of the former ball.
\begin{lemma}
    If ball $\hat{B}$ was deleted then $(1+\varepsilon)\hat{B} \subset (1+\varepsilon)B_i$, for each $B_i \in \cB$ created after the deletion of $\hat{B}$.
    \label{lm:small}
\end{lemma}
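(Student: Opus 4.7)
The plan is to combine Lemma \ref{lm:close} (which bounds where deleted balls live) with Lemma \ref{lm:expon} (which forces radii to grow monotonically), using $p_1$ as the bridge between the deleted ball $\hat{B}$ and any later-created ball $B_i$.

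First I would set up notation: let $r$ denote the radius of the ball whose creation triggered the deletion of $\hat{B}$. Applying Lemma \ref{lm:close} directly, $\hat{B}$ lies inside $B(p_1,\varepsilon^2 r/40)$; since $\hat{B}$ has radius at most $\varepsilon^2 r/80$, its $(1+\varepsilon)$-expansion is contained in $B(p_1,\varepsilon^2 r/40 + \varepsilon\cdot\varepsilon^2 r/80)\subseteq B(p_1, \varepsilon^2 r/20)$. This gives a clean containment of $(1+\varepsilon)\hat{B}$ in a small ball around the first point.

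Next, I would observe that the radius $r_i$ of any $B_i$ created after $\hat{B}$ was deleted satisfies $r_i\ge r$. The ball of radius $r$ itself is one such $B_i$, and for all subsequent additions the inequality follows by iterating Lemma \ref{lm:expon}, which gives strict growth $r_{j+1}\ge (1+\varepsilon^2/8)r_j$ between consecutive creations. Combined with the previous step, $(1+\varepsilon)\hat{B}\subseteq B(p_1,\varepsilon^2 r_i/20)$.

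Finally, I would use that $p_1\in Q$ at all times (by line 4 of the algorithm), so when $B_i=\MEB(Q\cup\{p\})$ was computed we had $\|p_1-c_i\|\le r_i$. For any $x\in(1+\varepsilon)\hat{B}$, the triangle inequality then gives
\[
\|x-c_i\|\le \|x-p_1\| + \|p_1-c_i\| \le \varepsilon^2 r_i/20 + r_i \le (1+\varepsilon)r_i,
\]
where the last inequality uses $\varepsilon\le 1$. Thus $x\in (1+\varepsilon)B_i$, which is exactly what we need. There is no real obstacle here; the only thing to be careful about is the quantitative slack from the $(1+\varepsilon)$-expansion on $\hat{B}$, which is absorbed comfortably because $\hat{B}$'s radius is already $\Theta(\varepsilon^2)$ times smaller than $r_i$.
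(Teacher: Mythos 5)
Your proof is correct and follows essentially the same route as the paper: Lemma~\ref{lm:close} places $(1+\varepsilon)\hat{B}$ in an $O(\varepsilon^2 r_i)$-ball around $p_1$, Lemma~\ref{lm:expon} gives $r_i \ge r$, and since $p_1 \in B_i$ the triangle inequality yields containment in $(1+O(\varepsilon))B_i \subseteq (1+\varepsilon)B_i$. Your constants are in fact slightly tighter than the paper's, which settles for $(1+\varepsilon/20)B_i$.
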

\begin{proof}
    Let $B_i \in \cB$.
    By Lemmas~\ref{lm:close} and~\ref{lm:expon}, it is $\hat{B} \subset B(p_1, \varepsilon^{2} r(B_i)/40)$; therefore, we have $(1+\varepsilon)\hat{B} \subset B(p_1, \varepsilon r(B_i)/20)$.
    Since $p_1$ is in $B$, it follows that $(1+\varepsilon)\hat{B} \subset (1+\varepsilon/20)B_i$. 
\end{proof}
Our main result follows from the preceding lemmas: 

\begin{theorem}\label{th:main}
$\cB$ contains $\mathcal{O}((1/\varepsilon^{2})\log(1/\varepsilon))$ balls and $S \subset (1+\varepsilon)\cB$.
\end{theorem}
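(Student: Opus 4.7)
The plan is to split the proof into a cardinality argument ($|\cB| = \mathcal{O}(\varepsilon^{-2}\log(1/\varepsilon))$) and a coverage argument ($S \subset (1+\varepsilon)\cB$), both using the preceding lemmas as black boxes.

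For the cardinality, the idea is that the radii of balls currently in $\cB$ span at most a factor of $\mathcal{O}(\varepsilon^{-2})$, while consecutive additions inflate the radius geometrically by Lemma~\ref{lm:expon}. Concretely, I would enumerate the balls ever added as $B_1, B_2, \ldots$ in arrival order, with radii $r_1 < r_2 < \cdots$ (strict monotonicity coming from $r_{i+1} \ge (1+\varepsilon^2/8)\, r_i$). If $B_m$ denotes the most recent addition, then by the deletion rule in line~3 a ball $B_i$ is still alive precisely when $r_i \ge \varepsilon^2 r_m / 80$, which already forces the alive balls to form a suffix $B_{i_0}, \ldots, B_m$ of the additions. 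Combining the two inequalities gives $(1+\varepsilon^2/8)^{m - i_0} \le 80/\varepsilon^2$, hence $m - i_0 = \mathcal{O}(\varepsilon^{-2}\log(1/\varepsilon))$, which bounds $|\cB|$.

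For the coverage, I would track what happens to each $p \in S$ upon arrival. The first point $p_1$ is always in $Q$, hence lies inside every \MEB computed in line~2, so $p_1$ belongs to every ball ever added and in particular to every ball of the current $\cB$. For $p \ne p_1$, there are two possibilities: either $p$ is discarded in line~1, so $p \in (1+\varepsilon)\hat B$ for some $\hat B$ in the then-current $\cB$; or $p$ becomes the guard of a freshly added ball $B_p$ with $p \in B_p$. Either way, there is a witness ball $\hat B$ that was in $\cB$ at some past moment and satisfies $p \in (1+\varepsilon)\hat B$. If $\hat B$ is still alive we are done; otherwise I would argue that some currently alive $B_i \in \cB$ was created strictly later than $\hat B$'s deletion --- because the addition that triggered $\hat B$'s deletion is either itself still alive or has been superseded by a still-younger addition --- and then Lemma~\ref{lm:small} yields $(1+\varepsilon)\hat B \subseteq (1+\varepsilon)B_i$, giving $p \in (1+\varepsilon)\cB$.

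The step I expect to require the most care is this inheritance argument in the coverage part: we need a currently alive ball strictly younger than the deletion of $\hat B$ in order to apply Lemma~\ref{lm:small}, and its existence has to be justified from the pairing of each deletion with a new addition together with the monotonicity of radii. The cardinality bound, by contrast, is essentially a one-line computation once Lemma~\ref{lm:expon} and the threshold $\varepsilon^2 r/80$ are in hand.
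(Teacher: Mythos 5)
Your proposal is correct and takes essentially the same route as the paper: the size bound comes from Lemma~\ref{lm:expon} together with the deletion threshold in line~3, and coverage comes from taking the witness ball present at each point's arrival and invoking Lemma~\ref{lm:small} (via a still-alive ball no older than the addition that triggered the witness's deletion). You merely spell out bookkeeping that the paper's two-sentence proof leaves implicit.
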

\begin{proof}
    The first claim follows from Lemma~\ref{lm:expon} and line 3 of the algorithm.
    For the second claim, let $p \in S$. By construction, $p$ is in the $(1+\varepsilon)$-expansion of a ball $B$ that entered $\cB$. By Lemma ~\ref{lm:small}, $(1+\varepsilon)B \subset \cup_{B' \in \cB} (1+\varepsilon)B'$, whether $B$ was deleted or not.
\end{proof}



\section{Applications}

\paragraph*{Farthest Neighbor Queries and Diameter}\label{sec:fn}
We largely follow the analysis of~\cite{AS15j}. 
For a query point $x\in\mathbb{R}^{d}$, the algorithm computes and returns the farthest point in $Q$: $q'=\arg\max_{q\in Q}\lVert qx\Vert$. We show that $\lVert p'x\Vert \le (\sqrt{2}+2\varepsilon)\lVert q'x\Vert$, where $p'=\arg\max_{p\in S}\lVert px\Vert$ is one of the (optimal) farthest points from $x$.  

Let $B_{i}=(c_{i},r_{i})$ be the ball in $\cB$ of greatest radius that contains $p'$ in its $(1+\varepsilon)$-expansion, which exists by Theorem~\ref{th:main}.
Applying the triangle inequality, followed by the inequality $x+y\leq \sqrt{2(x^{2}+y^{2})}$, we have that
\begin{equation}
\label{eq:xp1}    
\lVert xp'\Vert\leq  \lVert xc_{i}\Vert + \lVert c_{i}p'\Vert \leq \lVert x c_{i}\Vert+(1+\varepsilon)r_{i}\leq \sqrt{2}(\lVert x c_{i}\Vert^{2}+r_i^{2})^{1/2}+\varepsilon r_{i}\ .
\end{equation}
By Lemma~\ref{lm:halfspace}, when $B_i$ was created, there was a guard $z$ such that: i) $\lVert z c_i \Vert = r_i$, ii) $\lVert xz \Vert \ge r_i$, and iii) $\angle z c_i p' \ge 90^\circ$.
By iii) and the cosine law, we have 
\begin{equation}
\label{eq:xp2}
\lVert xz\Vert \ge (\lVert x c_{i}\Vert^{2} +r_{i}^{2})^{1/2} \ .
\end{equation}
Combining (\ref{eq:xp1}) and (\ref{eq:xp2}), we get that
\begin{equation}  
\label{eq:xp3}
\lVert xp'\Vert\le \sqrt{2}\lVert xz\Vert+\varepsilon r_{i}\leq (\sqrt{2}+\varepsilon)\lVert xz\Vert \ .
\end{equation}
Note that $\lVert q' x\Vert \ge r_m/2$, where $r_m$ is the radius of the largest ball in $\cB$,
as otherwise there is a ball containing $Q$ of radius less than $r_m$. 
Also, by Lemma~\ref{lm:close}, there is a point $w \in Q$ ($z$ or $p_1$) of Euclidean distance at most $\varepsilon^2 r_m/80 \le \varepsilon^2 \lVert q' x\Vert / 40 $ from $z$. By definition of $q'$, $\lVert wx \Vert \le \lVert q'x\Vert$.
Thus, 
$\lVert xz \Vert \le \lVert wx \Vert + \lVert wz \Vert \le (1+\varepsilon^2/40)\lVert q'x\Vert$.
Combining this with (\ref{eq:xp3}) gives that $q'$ is a $(\sqrt{2}+\varepsilon')$-$\FN(x)$, for $\varepsilon' = 2\varepsilon$.


Finally, for the closely related problem of Diameter, we return $\FN(p)$ for each point $p\in S$. If $\bar{p}=\FN(p)$ and $\lVert p\bar{p}\Vert$ exceeds the stored Diameter, then we replace the old Farthest Pair with the new pair of points $(p,\bar{p})$.

\begin{corollary} For a stream $S$ of points in $\mathbb{R}^{d}$, the Guarded Ball Cover of $\mathcal{O}((1/\varepsilon)^{2}\log(1/\varepsilon))$ stored points answers $(\sqrt{2}+\varepsilon)$-$\FN(x)$ for any query $x\in\mathbb{R}^{d}$, and maintains $(\sqrt{2}+\varepsilon)$-$\FP(S)$. 
\end{corollary}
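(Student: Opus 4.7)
The plan is to collect the three claims of the corollary from the preceding lemmas. The space bound is immediate from Theorem~\ref{th:main}: the algorithm stores one guard per ball of $\cB$ plus the point $p_1$, so the total is $\mathcal{O}((1/\varepsilon)^{2}\log(1/\varepsilon))$. For $(\sqrt{2}+\varepsilon)$-$\FN(x)$ I would return $q'=\arg\max_{q\in Q}\lVert qx\rVert$ and assemble the chain of inequalities already laid out above: use Theorem~\ref{th:main} to select the largest-radius ball $B_i=(c_i,r_i)$ whose $(1+\varepsilon)$-expansion contains an optimal far point $p'$, then combine the triangle inequality with $a+b\le \sqrt{2(a^{2}+b^{2})}$ to bound $\lVert xp'\rVert$ by $\sqrt{2}(\lVert xc_i\rVert^{2}+r_i^{2})^{1/2}+\varepsilon r_i$.

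The crux is then to produce a guard $z$ at Euclidean distance at least $(\lVert xc_i\rVert^{2}+r_i^{2})^{1/2}$ from $x$. To do so, I would apply Lemma~\ref{lm:halfspace} at the creation time of $B_i$ to the closed half-space through $c_i$ with normal $xc_i$ not containing $x$. The resulting $z$ satisfies $\lVert zc_i\rVert=r_i$ and $\angle zc_ix\ge 90^{\circ}$, so the cosine law delivers the required lower bound and, with the cheap bound $r_i\le \lVert xz\rVert$ from the same lemma, yields $\lVert xp'\rVert\le(\sqrt{2}+\varepsilon)\lVert xz\rVert$. The main obstacle is that $z$ may no longer be in the current coreset $Q$, since the ball it guarded could have been deleted after $B_i$ was added. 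Here Lemma~\ref{lm:close} does the real work: such a deleted $z$ lies within $\varepsilon^{2} r_m/80$ of $p_1\in Q$, where $r_m$ is the current maximum radius. Combined with the observation that $\lVert q'x\rVert \ge r_m/2$, which holds because otherwise $Q$ would fit in a ball of radius below $r_m$ and contradict the choice of \MEB, this substitution costs at most a $(1+\varepsilon^{2}/40)$ factor relative to $\lVert q'x\rVert$ and is absorbed by rescaling $\varepsilon$ by a constant.

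For maintaining $(\sqrt{2}+\varepsilon)$-$\FP(S)$, upon each arriving point $p$ I would run the \FN routine above with $p$ as the query against the current $Q$, compare the returned distance with the stored best, and replace the stored pair with $(p,\bar{p})$ whenever the new distance is larger. Since the \FN guarantee holds at every query time, and since the second-arriving endpoint of an optimal diametral pair triggers such a query (with the first endpoint either still represented in $Q$ or substitutable by $p_1$ via the same Lemma~\ref{lm:close} argument), the stored pair is a $(\sqrt{2}+\varepsilon)$-approximation of $\FP(S)$ at every moment of the stream.
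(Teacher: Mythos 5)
Your proposal is correct and follows essentially the same route as the paper: return $\arg\max_{q\in Q}\lVert qx\rVert$, pick the largest ball whose $(1+\varepsilon)$-expansion contains the true farthest point, apply Lemma~\ref{lm:halfspace} at that ball's creation to get a boundary guard $z$ with $\angle z c_i x \ge 90^\circ$, and use Lemma~\ref{lm:close} plus $\lVert q'x\rVert\ge r_m/2$ to substitute $p_1$ for a deleted $z$ at a $(1+O(\varepsilon^2))$ cost, then rescale $\varepsilon$; the reduction of $\FP$ to per-point $\FN$ queries is also the paper's argument. The only cosmetic difference is that you state the half-space/angle condition with respect to the query $x$ (which is what the cosine-law step actually needs), matching the intended reading of the paper's condition~iii).
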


\paragraph*{Minimum Enclosing Ball}\label{sec:meb}

The following theorem was shown by Chan and Pathak~\cite{ChanP11} and improved the guarantee of the approximate \MEB algorithm of Agarwal and Sharathkumar to $1.22+\varepsilon$ (see~\cite{AS15j}, p. 91):

\begin{theorem}[Theorem 1 in~\cite{ChanP11}, Theorem 1 in~\cite{thesis}]\label{th:chanPathak}
Let $K_{1},...,K_{u}$ be subsets of a point set $S$ in $\mathbb{R}^{d}$, with $B_{i}=\MEB(K_{i})$ such that: i) $r(B_{i})$ is increasing over $i$, and ii) $K_{i}\subset (1+\varepsilon)B_{j}$, for each $i<j$. Then, $r(B)\leq (1.22+\varepsilon)\cdot r(\MEB(S))$, where $B=\MEB(\bigcup_{i=1}^{u} B_{i})$.
\end{theorem}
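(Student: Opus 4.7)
The plan is to reduce the problem to bounding the quantity $\min_{\hat{c}} \max_i (\lVert \hat{c} - c_i \rVert + r_i)$, since this equals $r(B)$ by the definition of the minimum enclosing ball of a union of balls. The objective is then to show that this minimum is at most $(1.22 + \varepsilon) r^{*}$, where $r^{*} := r(\MEB(S))$.

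First, I would normalize so that $r^{*} = 1$ and place $c^{*} := c(\MEB(S))$ at the origin. Two immediate consequences follow from the hypotheses. Since $K_u \subseteq S$, we have $r_u = r(\MEB(K_u)) \le r^{*} = 1$, and by the monotonicity in condition (i), $r_i \le r_u \le 1$ for all $i$. Since $c_i$ lies in $\operatorname{conv}(K_i) \subseteq B(c^{*}, 1)$, we also get $\lVert c_i \rVert \le 1$. Condition (ii) provides the additional restriction $c_i \in (1+\varepsilon) B_u$ for $i < u$, which confines $c_i$ to the lens-shaped region $B(c^{*}, 1) \cap B(c_u, (1+\varepsilon) r_u)$. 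Writing $r_u = 1-\delta$ for some $\delta \ge 0$, a standard stability property of the minimum enclosing ball implies that $\lVert c_u - c^{*} \rVert \le g(\delta)$ for a function $g$ with $g(0) = 0$ and $g(\delta) \to 1$ as $\delta \to 1$.

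The main step is a careful choice of the center $\hat{c}$ of $B$. The naive choice $\hat{c} = c_u$ yields only $r(B) \le (2+\varepsilon) r_u$, which is too weak when $r_u$ is close to $1$. I would therefore split into two regimes based on the size of $r_u$. In the regime where $r_u$ is small---specifically $r_u \le (1.22+\varepsilon)/(2+\varepsilon)$---setting $\hat{c} = c_u$ already suffices. In the complementary regime where $r_u$ is close to $1$, I would place $\hat{c}$ on the segment from $c^{*}$ to $c_u$, then use (a) the stability bound on $\lVert c_u - c^{*} \rVert$, (b) the inclusion $c_i \in B(c^{*},1)$, and (c) the cosine law applied to the triangle with vertices $\hat{c}, c_i, c^{*}$, to bound $\lVert \hat{c} - c_i \rVert + r_i$. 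Each of the two regimes then yields the same bound at the breakpoint, which is what determines the constant.

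The main obstacle is pinning down the precise constant $1.22$: it is the value that simultaneously equalizes the two regimes at the extremal configuration. I expect the worst case to be essentially a two-ball instance---one small ball positioned diametrically opposite $c_u$ within the lens, while $r_u$ is chosen to saturate the stability bound on $\lVert c_u - c^{*}\rVert$. Solving the resulting one-variable optimization, which mixes a radius trade-off with the geometric stability inequality for the MEB, is the step I expect to carry the full technical weight; every other piece of the argument is a fairly direct consequence of the triangle or cosine inequality together with the containment conditions.
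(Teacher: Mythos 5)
This statement is not proved in the paper at all: it is imported verbatim from Chan and Pathak~\cite{ChanP11}, whose actual argument is a lengthy case analysis built on the half-space lemma (Lemma~\ref{lm:halfspace}), a center-stability lemma for nested point sets, witness points of each $K_i$ on the boundary of $B_i$, and a multi-variable optimization that is partly solved numerically --- the constant $1.22$ is not a closed-form breakpoint value. So there is no in-paper proof to compare against, and your sketch has to stand on its own; unfortunately it has a genuine gap, and not only because the step you yourself flag as carrying ``the full technical weight'' is deferred.

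The deeper problem is that the relaxation you set up cannot yield $1.22$ no matter how the final one-variable optimization is solved. In your second regime you retain only: $r_i\le r_u\le r^{*}$, $c_i\in B(c^{*},r^{*})$, $c_i\in(1+\varepsilon)B_u$, and the stability bound on $\lVert c_u-c^{*}\rVert$. Take $r^{*}=1$ and $r_u=1$ (so $c_u=c^{*}$, which your regime~2 allows); then the constraint set admits $c_1=e_1$, $c_2=-e_1$, $r_1=r_2=1$, for which $\min_{\hat c}\max_i(\lVert\hat c-c_i\rVert+r_i)=2$. Even if you add the per-$i$ stability bound $\lVert c_i-c^{*}\rVert\le\sqrt{1-r_i^{2}}$ (which you only invoke for $c_u$), the configuration $r_1=r_2=1/\sqrt2$, $c_{1,2}=\pm e_1/\sqrt2$ is still feasible and forces the value $\sqrt2\approx1.414>1.22$. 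These configurations are of course not realizable by genuine coresets --- they violate exactly the information your sketch discards: condition~(ii) must hold between \emph{every} pair $i<j$, not just against $B_u$, and the boundary witness points of each $K_i$ (supplied by Lemma~\ref{lm:halfspace}) must themselves lie in $B(c^{*},r^{*})$ and in the $(1+\varepsilon)$-expansions of all later balls. Exploiting those facts is precisely where the Chan--Pathak proof does its work, and without them the best your framework can certify is roughly $\sqrt2$ (indeed, Agarwal--Sharathkumar's own simpler analysis along these lines gives $(1+\sqrt3)/2\approx1.366$). Also, your guess that the extremal case is a clean two-ball instance equalizing two regimes does not match the known analysis, where the optimal bound sits strictly between the $1.207$ lower bound and $1.22$ and is obtained numerically.
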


In our case, $K_{i}$ is the coreset on the boundary of the \MEB computed in line 2 of the algorithm. Therefore, the first requirement of Theorem~\ref{th:chanPathak} holds by Lemma~\ref{lm:expon}. 
The second requirement follows immediately for points in $Q$ and by Lemma~\ref{lm:small} for points deleted from $Q$.

\begin{corollary}
For a stream $S$ of points in $\mathbb{R}^{d}$, the Guarded Ball Cover of $\mathcal{O}((1/\varepsilon^{2})\log(1/\varepsilon))$ stored points maintains $(1.22+\varepsilon)$-$\MEB(S)$. 
\end{corollary}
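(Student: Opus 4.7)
The plan is to apply Theorem~\ref{th:chanPathak} to the balls currently stored in $\cB$. Let $B_1, \dots, B_u$ denote these balls in the order they were added, and let $K_i = Q_i \cup \{p_i\}$, where $Q_i$ is the coreset $Q$ at the moment $B_i$ was created in line~2 and $p_i$ is the newly-arrived point that triggered its creation. Then $B_i = \MEB(K_i)$ by construction, and $K_i \subseteq S$, since every guard and every new arrival is a stream point. The output upon a query is $\MEB(\bigcup_{i=1}^u B_i)$, slightly enlarged if needed so that it contains all of $S$; the space bound $\mathcal{O}(\varepsilon^{-2}\log(1/\varepsilon))$ is inherited from Theorem~\ref{th:main}.

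First I verify the two hypotheses of Theorem~\ref{th:chanPathak}. Hypothesis (i), that the radii $r(B_i)$ are increasing, is exactly Lemma~\ref{lm:expon}. For hypothesis (ii), fix $i < j$ and a point $q \in K_i$. If $q$ is still a guard at the moment $B_j$ is formed, then $q \in Q_j \cup \{p_j\}$, and so $q \in B_j \subseteq (1+\varepsilon)B_j$. Otherwise, $q$ was the guard of some ball $\hat{B}$ that was deleted before $B_j$ was added; since $\hat{B}$ was computed as the \MEB of a set containing $q$, we have $q \in \hat{B}$, and Lemma~\ref{lm:small} then yields $(1+\varepsilon)\hat{B} \subseteq (1+\varepsilon)B_j$, hence $q \in (1+\varepsilon)B_j$.

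With both hypotheses established, Theorem~\ref{th:chanPathak} delivers $r(\MEB(\bigcup_i B_i)) \le (1.22 + \varepsilon)\, r(\MEB(S))$. To satisfy the definition of $\alpha$-$\MEB(S)$, which additionally requires $S$ to lie inside the output ball, I would instead output $\MEB(\bigcup_i (1+\varepsilon)B_i)$; by Theorem~\ref{th:main} this ball contains $S$, and its radius exceeds $r(\MEB(\bigcup_i B_i))$ by at most a factor of $1+\varepsilon$, which is absorbed by rescaling $\varepsilon$ down by a constant. The main bookkeeping obstacle is the second case of hypothesis (ii): one must track every element of $K_i$ across all subsequent ball additions, identify which earlier deletion removed it from $Q$, and confirm that it sits inside that deleted ball so that Lemma~\ref{lm:small} may be applied; everything else is an immediate appeal to results already in hand.
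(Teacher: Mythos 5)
Your proof is correct and follows essentially the same route as the paper: it applies Theorem~\ref{th:chanPathak}, obtaining requirement (i) from Lemma~\ref{lm:expon} and requirement (ii) by splitting into points still in $Q$ when $B_j$ is formed versus guards of deleted balls, handled via Lemma~\ref{lm:small}. Your choices of $K_i = Q_i \cup \{p_i\}$ instead of the boundary coreset, and of explicitly enlarging the output ball to $\MEB\bigl(\bigcup_i (1+\varepsilon)B_i\bigr)$ so that it covers $S$, are minor refinements that do not change the argument.
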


\paragraph*{Coreset for Minimum Enclosing Ball}\label{sec:cor}

We mostly follow the analysis of~\cite{AS15j}. 
Let $B=(c,r_{m})$ be the most recently created ball added to $\cB$. The $\MEB(S)$ has radius at least $r_m$, since $B$ is an $\MEB$ of a subset of $S$.
We claim that $(\sqrt{2}+\varepsilon)B$ contains all points in $S$, which implies that it forms a $(\sqrt{2}+\varepsilon)$-$\coreset(S)$. 
Namely, we show that each point $y \in S$ has $\lVert y c\Vert \le (\sqrt{2}+\varepsilon')r_m$, for $\varepsilon' = 2\varepsilon$.

Consider a point $y \in S$ that is farthest from $c$.
Let $B_i=(c_i,r_i)$ be a guarded ball that has not been deleted and whose $(1+\varepsilon)$-expansion contains $y$.
By the triangle inequality and the definition of $B_i$,
$\lVert yc\Vert\le \lVert c c_i\Vert + \lVert yc_i\Vert \le \lVert cc_{i}\Vert+(1+\varepsilon)r_{i}$. 
Let $h$ be the hyperplane passing through $c_i$ with direction $c c_i$ as normal and let $h^+$ be the halfspace bounded by $h$ that does not contain $c$.
By Lemma~\ref{lm:halfspace} there is a guard $g$ in $h^+$, and by the cosine law it is
$\lVert gc\Vert\ge \sqrt{\lVert cc_{i}\Vert^{2}+r_{i}^{2}}$.
Then (using the inequality $a+b\le \sqrt{2(a^{2}+b^{2})}$),
\[ \frac{\lVert y c\Vert}{\lVert gc\Vert} \leq \frac{\lVert cc_{i}\Vert+(1+\varepsilon)r_{i}}{\sqrt{\lVert cc_{i}\Vert^{2}+r_{i}^{2}}} \le \sqrt{2}+\varepsilon \ . \] 
By Lemma~\ref{lm:close}, there is a guard $q \in Q$ with $\lVert q g\Vert \le \varepsilon^2 r_m/40$, so by the triangle inequality,
$\lVert gc\Vert \le \lVert qc\Vert + \lVert gq\Vert \le (1 + \varepsilon^2/40)r_m$.
Hence, $\lVert y c\Vert \le (\sqrt{2}+\varepsilon)(1+\varepsilon^2/40) r_m \le (\sqrt{2}+2\varepsilon) r_m$.

\begin{corollary}\label{cor:new4}
For a stream $S$ of points in $\mathbb{R}^{d}$, the Guarded Ball Cover of $\mathcal{O}((1/\varepsilon)^{2}\log(1/\varepsilon))$ stored points maintains $(\sqrt{2}+\varepsilon)$-$\coreset(S)$. 
\end{corollary}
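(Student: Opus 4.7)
The plan is to show that the Guarded Ball Cover already stores a $(\sqrt{2}+\varepsilon)$-coreset for the \MEB, so the space bound carries over from Theorem~\ref{th:main} and only the approximation quality requires work. Let $B = (c, r_m)$ be the most recently added ball in $\cB$, which, being an \MEB of a subset of $S$, has radius $r_m \le r(\MEB(S))$. Hence it suffices to show that the $(\sqrt{2}+\varepsilon')$-expansion of $B$ covers every $y \in S$ for some $\varepsilon' = \Theta(\varepsilon)$, and then rescale $\varepsilon$ at the end.

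I would fix a point $y \in S$ maximizing $\|yc\|$ and use Theorem~\ref{th:main} to find a guarded ball whose $(1+\varepsilon)$-expansion contains $y$. If this ball has been deleted, Lemma~\ref{lm:small} replaces it by a currently alive ball $B_i = (c_i, r_i)$ with $y \in (1+\varepsilon)B_i$. The triangle inequality gives $\|yc\| \le \|cc_i\| + (1+\varepsilon)r_i$. On the closed halfspace through $c_i$ with outward normal $c_i - c$ (the one not containing $c$), Lemma~\ref{lm:halfspace} applied to the set defining $B_i$ produces a point $g$ with $\|gc_i\| = r_i$, and the cosine law then yields $\|gc\| \ge \sqrt{\|cc_i\|^2 + r_i^2}$. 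Dividing the two estimates and applying the elementary inequality $a + b \le \sqrt{2(a^2 + b^2)}$ gives the ratio bound $\|yc\|/\|gc\| \le \sqrt{2} + \varepsilon$.

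The last step turns this ratio into an absolute bound in terms of $r_m$. By Lemma~\ref{lm:close}, there is a surviving guard $q \in Q$ within $\varepsilon^2 r_m/40$ of $g$; since $p_1 \in B$ and therefore $q \in B$ as well, the triangle inequality gives $\|gc\| \le (1 + \varepsilon^2/40)\, r_m$. Combining with the previous display and rescaling $\varepsilon$ by a constant factor yields $\|yc\| \le (\sqrt{2} + \varepsilon)\, r_m$, as required.

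The main obstacle I anticipate is ensuring that the witness point $g$ produced by Lemma~\ref{lm:halfspace}, which is guaranteed only at the instant $B_i$ was created, is still effectively represented in the current coreset $Q$ — this is exactly where the Guarded Ball Cover design pays off, since the $p_1$-proxy of Lemma~\ref{lm:close} supplies a nearby stored point even after $g$ itself has been evicted. Threading Lemma~\ref{lm:small} (to handle deleted covering balls), Lemma~\ref{lm:halfspace} (to produce $g$), and Lemma~\ref{lm:close} (to relate $g$ to a surviving guard) in the right order, and budgeting the $\varepsilon^2$-slack against the leading $\sqrt{2}$ factor without losing a constant, is the part that requires the most care.
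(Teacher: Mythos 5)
Your proposal is correct and follows essentially the same route as the paper: reduce to bounding $\lVert yc\rVert/r_m$ for the farthest $y\in S$ via a surviving covering ball $B_i$ (Theorem~\ref{th:main}/Lemma~\ref{lm:small}), the halfspace witness $g$ from Lemma~\ref{lm:halfspace} with the cosine law and $a+b\le\sqrt{2(a^2+b^2)}$, and the $p_1$-proxy of Lemma~\ref{lm:close} to replace $g$ by a stored guard. The only nitpick is your justification that $q\in B$: it holds because $B$ is the \MEB of the coreset $Q\cup\{p\}$, hence contains every stored guard, not because $p_1\in B$.
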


\section{Lower Bound for Farthest Pair and Farthest Neighbor Queries}

We show that computing a $\sqrt{2}$-approximation (with $\varepsilon =0$) of Farthest Neighbor queries or Farthest Pair is impossible in the streaming model without returning points outside of $S$.

This applies to the computational model of ``coreset-based algorithms'' in which the space bound is counted in the number of input points stored and the algorithm must return an input point upon a query (or two input points for the Farthest Pair);
crucially, once a point is deleted from memory, it cannot be retrieved.
This model is akin to comparison-based model in sorting or selection,
as used in~\cite{CormodeV20} for streaming lower bounds for quantile estimation.

\begin{theorem}
	For any $\varepsilon > 0$ and $d = \Omega(1/\varepsilon)$,
	any coreset-based randomized streaming algorithm answering approximate Farthest Neighbor queries or maintaining the approximate Farthest Pair in $\mathbb{R}^{d}$ with multiplicative error $\le \sqrt{2}+\varepsilon$, 
	has to store $\Omega(1/\varepsilon)$ points.
\end{theorem}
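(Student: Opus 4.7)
The plan is to apply Yao's minimax principle with a hard distribution supported on inputs in $\mathbb{R}^d$ with $d = k = \Theta(1/\varepsilon)$. The base stream is the set of orthonormal basis vectors $e_1,\ldots,e_k$ of $\mathbb{R}^k$, presented in a fixed (say, natural) order, and a hidden index $i\in[k]$ is drawn uniformly at random. With the scalar $u = 2/(k-1)$, I define the distinguished point
\[
    q_i \;=\; -\,e_i \,+\, u \sum_{j\neq i} e_j \,.
\]
For the Farthest Neighbor version, $q_i$ is the query presented after the stream; for Farthest Pair, $q_i$ is appended to the stream as its last element. In either case the adversary's entire input is known only up to the random $i$, which is revealed only when $q_i$ appears.

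The quantitative heart of the argument is a direct distance computation. One checks $\lVert q_i - e_i\rVert^2 = 4 + 4/(k-1)$ and, crucially, $\lVert q_i - e_j\rVert^2 = 2$ for every $j\neq i$; the value of $u$ is chosen precisely so that all distances from $q_i$ to $\{e_j : j\neq i\}$ collapse to $\sqrt{2}$, matching the internal basis distances $\lVert e_j - e_{j'}\rVert = \sqrt{2}$. Consequently the unique point realising Farthest Neighbor of $q_i$ is $e_i$ and the unique pair realising the diameter is $(q_i,e_i)$; any other output is forced to have approximation ratio at least $\sqrt{2}\sqrt{1 + 1/(k-1)}$. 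A routine estimate (using $\sqrt{1+2\varepsilon}\geq 1+\varepsilon$ for small $\varepsilon$) shows that taking $k = \lceil 1/(2\varepsilon)\rceil + 1$, so that $1/(k-1)\geq 2\varepsilon$, makes this ratio strictly exceed $\sqrt{2}+\varepsilon$ for all sufficiently small $\varepsilon$.

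Next I would apply Yao's principle. Fix any deterministic coreset-based algorithm with memory $s$. By the time $q_i$ appears, the stored set $T_0\subseteq\{e_1,\ldots,e_k\}$ is a deterministic function of the fixed arrival order of the basis vectors and hence independent of the hidden $i$. The coreset-based model forbids resurrecting a previously deleted input point, so $e_i$ is present in memory at answer time if and only if $e_i\in T_0$, an event of probability at most $s/k$ under the uniform $i$; whenever it fails, the previous paragraph gives a ratio strictly greater than $\sqrt{2}+\varepsilon$, both for FN (where no stored point beats $e_i$) and for FP (where, absent $e_i$, every pair inside $T_0\cup\{q_i\}$ has length $\leq \sqrt{2}$). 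Averaging over the algorithm's coins by Yao, any randomized coreset-based algorithm with success probability at least $2/3$ must satisfy $s/k \geq 2/3$, yielding $s = \Omega(k) = \Omega(1/\varepsilon)$.

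The step I expect to be the main obstacle is identifying the correct perturbation that defines $q_i$. The naive choice $q_i = -e_i$ already places the true Farthest Neighbor at distance $2$ while keeping the next best one at $\sqrt{2}$, but it yields ratio \emph{exactly} $\sqrt{2}$ and therefore cannot separate $\sqrt{2}+\varepsilon$ from $\sqrt{2}$; a nontrivial offset along $\sum_{j\neq i}e_j$ is needed. The particular value $u = 2/(k-1)$ is the one that simultaneously inflates $\lVert q_i - e_i\rVert$ by a factor $\sqrt{1+1/(k-1)} = 1 + \Theta(1/k)$ and collapses every remaining $\lVert q_i - e_j\rVert$ to $\sqrt{2}$, so that no alternative output (including ones that use the freshly arrived $q_i$) gains anything from the perturbation. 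Once this gap is in hand, the remainder of the proof is a straightforward application of Yao's principle that exploits the defining ``no resurrection'' feature of the coreset-based model.
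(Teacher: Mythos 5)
Your construction is essentially the paper's: the same stream of standard basis vectors, a uniformly random coordinate flipped to $-1$ with a small positive offset on the remaining coordinates, and the same Yao-plus-counting argument, the only (cosmetic) difference being your offset $u=2/(k-1)$, which collapses all alternative distances to exactly $\sqrt{2}$, versus the paper's offset $2\varepsilon$, which makes them $\sqrt{2-2\varepsilon}$ (your variant in fact handles the Farthest Pair case a touch more cleanly). One small nit: with $k-1=\lceil 1/(2\varepsilon)\rceil$ you get $1/(k-1)\le 2\varepsilon$ rather than $\ge 2\varepsilon$; taking $k-1=\lfloor 1/(2\varepsilon)\rfloor$ (or using the weaker bound $1/(k-1)\ge 2\varepsilon/(1+2\varepsilon)$) fixes the estimate and the conclusion is unaffected.
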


\begin{proof}
	We use the easy direction of Yao's minimax principle and design a distribution over instances (points and Farthest Neighbor queries) so that 
	any deterministic streaming algorithm using space $o(1/\varepsilon)$ will, with high constant probability, answer a $\FN(q)$ query incorrectly,
	i.e., the point returned on the query $q$ will be more than $(\sqrt{2}+\varepsilon)$-factor closer to $q$ than the farthest point. The same argument applies to Farthest Pair.
	
	Supposing, without loss of generality,\ that $2/\varepsilon\in \mathbb{Z}$, we insert $k := 1/(2\varepsilon) + 1$ points of the standard basis, i.e., $\mathbf{e}_i = (0, 0, \dots, 1, 0, \dots, 0)$, where
	the $i$-th coordinate is $1$, for $i = 0, \dots, k-1$ (the order of insertions does not matter).
	The random part of the construction is to choose $j\in \{0, \dots, k-1\}$ uniformly at random 
	and make a Farthest Neighbor query for point $q_j = (2\varepsilon, 2\varepsilon, \dots, -1, 2\varepsilon, \dots, 2\varepsilon, 0, \dots, 0)$,
	where the coordinate $j$ is $-1$ and only the first $k$ coordinates are not $0$.
	Clearly, the farthest point from $q_j$ is $e_j$ and their Euclidean distance is $\sqrt{4 + (k-1)\cdot 4\varepsilon^2} = \sqrt{4 + 2\varepsilon}$,
	using the choice of $k$.
	
	However, with some constant probability, point $e_j$ is not stored as the algorithm stores $o(1/\varepsilon) = o(d)$ points.
	Conditioning on the event that $e_j$ is not stored, the algorithm needs to answer the query with a point $e_i$ for $i\neq j$.
	However, the Euclidean distance between $q_j$ and $e_i$ with $i\neq j$ is 
	$\sqrt{(1-2\varepsilon)^2 + 1 + (k-2)\cdot 4\varepsilon^2} = \sqrt{2 - 4\varepsilon + (k-1)\cdot 4\varepsilon^2} = \sqrt{2 - 2\varepsilon}$,
	using the choice of $k$.
	It follows that the approximation ratio of the algorithm is at least 
	$\sqrt{(4 + 2\varepsilon)/(2 - 2\varepsilon)}>\sqrt{2} + \varepsilon$.
\end{proof}

\bibliography{references}

\appendix

\section{A Note on Lemma~2 in~\cite{AS15j}}
\label{app:AS15issue}
We report on a possible issue in the proof of Lemma~2 in~\cite{AS15j} (a similar issue appears in the conference version~\cite{AS15}). Lemma~2 in~\cite{AS15j} states that for any $i$, $r(B_{i+1})\ge (1+\varepsilon^2 / 8)\cdot r(B_i)$, where $B_i$ and $B_{i+1}$ are two consecutive balls of the Blurred Ball Cover. The property that the radii of balls increase geometrically is crucially needed to bound the space requirements.

The proof of Lemma~2 goes as follows: 
Let ball $B_{i+1}=\APPROXMEB(\bigcup_{j\le i} K_{j}\cup A),\varepsilon/3)$, where $K_{j}$ is the coreset of $B_{j}$, $A$ is a buffer of incoming points, and \APPROXMEB is the subroutine of~\cite{BadoiuClarkson} that computes it;
the computation of $B_{i+1}$ is triggered because each point in $A$ was not contained in any $(1+\varepsilon)$-expansion of a ball $B_j$ for $j\le i$. Define ball $B'$ as the \MEB of $\bigcup_{j\le i} K_{j}\cup A$. 
It is subsequently claimed that $r(B_{i+1})\ge r(B')$, without a proof. 
However, $B'$ is the \MEB of $\bigcup_{j\le i} K_{j}\cup A$, while $B_{i+1}$ is an \emph{approximate} \MEB for $\bigcup_{j\le i} K_{j}\cup A$, namely the $(1+\varepsilon/3)$-expansion of $B_{i+1}$ contains all these points but a smaller expansion of $B_{i+1}$ may not contain them all. 

We observe that this can be fixed by the computation of a tighter \MEB approximation
$B_{i+1}=\APPROXMEB(\bigcup_{j\le i} K_{j}\cup A),\varepsilon^2/16)$. That, however, may result in a coreset $K_{i+1}$ of size $\Theta(\varepsilon^{-2})$, which (unlike our approach) increases the space bound in~\cite{AS15j} to $\mathcal{O}(\varepsilon^{-4} \log(1/\varepsilon))$.
We are unaware of a fix that does not affect the space bound or guarantees.

\end{document}